\theoremstyle{plain}
\newtheorem{thm}{\protect\theoremname}
\theoremstyle{definition}
\newtheorem{defn}[thm]{\protect\definitionname}
\newenvironment{proof}[1][\protect\proofname]{\par
	\normalfont\topsep6\p@\@plus6\p@\relax
	\trivlist
	\itemindent\parindent
	\item[\hskip\labelsep\scshape #1]\ignorespaces
}{%
	\endtrivlist\@endpefalse
}
\providecommand{\proofname}{Proof}
\journal{to some jrn.}
\providecommand{\definitionname}{Definition}
\providecommand{\theoremname}{Theorem}
\begin{document}

\begin{frontmatter}{}

\title{On the geometric mean method for incomplete pairwise comparisons}

\author{Konrad Ku\l akowski}

\ead{kkulak@agh.edu.pl}

\address{AGH University of Science and Technology}
\begin{abstract}
When creating rankings based on pairwise comparisons, we very often
face difficulties in completing all the results of direct comparisons.
In this case, the solution is to use a ranking method based on an
incomplete PC matrix. The article presents an extension of the well-known
geometric mean method for incomplete PC matrices. The description
of the method is accompanied by theoretical considerations showing
the existence of the solution and the optimality of the proposed approach.
\end{abstract}
\begin{keyword}
pairwise comparisons\sep geometric mean method\sep incompleteness
\sep incomplete pairwise comparison matrices
\end{keyword}

\end{frontmatter}{}

\section{Introduction}

The ability to compare has accompanied mankind for centuries. When
comparing products in a convenience store, choosing dishes in a restaurant,
or selecting a gas station with the most attractive fuel price, people
are trying to make the best choice. During the process of selecting
the best option, the available alternatives are compared in pairs.
This observation underlies many decision-making methods. The first
use of pair comparison as a formal basis for the decision procedure
is attributed to the XIII-century mathematician \emph{Ramon Llull}
who proposed a binary electoral system \citep{Colomer2011rlfa}. His
method over time was forgotten and rediscovered in a similar form
by \emph{Condorcet} \citep{Condorcet1785eota}. Although both \emph{Llull}
and \emph{Condorcet} treated comparisons as binary, i.e. the result
of comparisons can be either a win or loss (for a given alternative),
\emph{Thurstone} proposed the use of pairwise comparisons (PC) in
a more generalized, quantitative way \citep{Thurstone1994aloc}. Since
then, the result of the single pairwise comparison can be identified
with a real positive number where the values greater than $1$ mean
the degree to which the first alternative won, and in the same way,
the values smaller than $1$ mean the degree to which the second alternative
won. Llull\textquoteright s electoral system was in some form reinvented
by Copeland \citep{Colomer2011rlfa,Copeland1951arsw}.

Comparing alternatives in pairs is a cornerstone of AHP (Analytic
Hierarchy Process) - a multi-criteria decision-making method proposed
by \emph{Saaty} \citep{Saaty1977asmf}. To some extent, it is also
used in other decision-making techniques, such as ELECTRE, PROMETHEE,
and MACBETH \citep{Figueira2005em,Brans2005pm,BanaECosta2005otmf},
or the recently proposed HRE (Heuristic Rating Estimation) and BWM
(Best-Worst Method) \citep{Kulakowski2015hreg,Rezaei2015bwmc}. Due
to its simplicity and intuitive meaning, comparing alternatives in
pairs is still an inspiration to researchers. Examples of scientific
explorations are the dominance-based rough set approach \citep{Greco2010dbrs},
fuzzy PC \citep{Ramik2015ibfp}, a more abstract approach based on
group theory \citep{Cavallo2018aguf,Kulakowski2019witc,Wajch2019fpct},
non-numerical ranking procedures \citep{Janicki2012oapc}, inconsistency
of PC \citep{Brunelli2018aaoi,Bozoki2014orio,Kulakowski2014tntb},
ordinal PC \citep{Kulakowski2018iito,Iida2009octa} or properties
of pairwise rankings \citep{Bozoki2014iwfp,Kulakowski2015otpo,Koczkodaj2017onoi}.

After performing appropriate comparisons, their results are subject
to further calculations, resulting in the final ranking of considered
objects. It is easy to compute that the set consisting of $n$ alternatives
allows $n(n-1)/2$ comparisons to be made. Thus, for five alternatives,
experts need to perform ten comparisons, for six - fifteen, and so
on. The number of necessary collations increases with the square of
the number of alternatives. Since pairwise judgments are very often
made by experts, making a large number of paired comparisons can be
difficult and expensive. This observation encouraged experts to search
for ranking methods using a reduced number of pairwise comparisons.
These studies have resulted in several methods, including \emph{Harker's}
eigenvalue based approach \citep{Harker1987amoq}, the logarithmic
least square (LLS) method for incomplete PC matrices \citep{Bozoki2010ooco,Bozoki2019tlls},
spanning-tree approach \citep{Lundy2017tmeo} or missing values estimation
\citep{Ergu2016etmv}.

In this study, we propose a direct extension of the popular geometric
mean (GM) method for incomplete PC matrices. The method proposed by
Harker served as the starting point of our procedure. The modified
method, like the original GM method, is equivalent to the LLS method
for incomplete PC matrices. Hence, similarly to the LLS method, it
minimizes the logarithmic least square error of the ranking.

The paper is composed of \ref{sec:Summary} sections including an
introduction and summary. The notion of an incomplete PC matrix and
the indispensable amount of definitions are introduced in (Sec. \ref{sec:Preliminaries}).
Section \ref{sec:Priority-deriving-methods} briefly presents the
existing priority deriving methods for incomplete PC matrices with
particular emphasis on those which the presented solution is based
on. The next part, Section \ref{sec:Idea-of-the}, presents the modified
GM method for incomplete PC matrices. It is followed by an illustrative
example (Sec. \ref{sec:Illustrative-example}). The penultimate Section
\ref{sec:Properties-of-the} addresses the problems of optimality
and the existence of a solution. A brief summary is provided in (Sec.
\ref{sec:Summary}).

\section{Preliminaries\label{sec:Preliminaries}}

The input data for the priority deriving procedure is a set of pairwise
comparisons. Due to the convenience of calculations, it is usually
presented in the form of a pairwise comparison (PC) matrix $C=[c_{ij}]$
where a single entry $c_{ij}$ represents the results of comparisons
of two alternatives $a_{i}$ and $a_{j}$. Unless explicitly stated
otherwise, we will assume that the set of alternatives $A$ consists
of n objects (options, alternatives), i.e. $A=\{a_{1},\ldots,a_{n}\}$.
In $C$, not all entries may be specified. Such a matrix will be called
incomplete or partial. The missing comparison will be denoted by $?$.
Let us define the PC matrix formally.
\begin{defn}
An incomplete PC matrix for $n$ alternatives is said to be the matrix
$C=[c_{ij}]$ such that $c_{ij}\in\mathbb{R}_{+}^{n}\cup\{?\}$, $c_{ii}=1$
for $i,j=1,\ldots,n$. Comparison of the i-th and j-th alternatives
for which $c_{ij}=?$ is said to be missing.

A $4\times4$ PC matrix may look as follows:
\end{defn}
\begin{equation}
C=\left(\begin{array}{cccc}
1 & ? & ? & c_{14}\\
? & 1 & c_{23} & ?\\
? & 1/c_{23} & 1 & c_{34}\\
1/c_{14} & ? & 1/c_{34} & 1
\end{array}\right)\label{eq:ch4:incomplete-matrix-example-c}
\end{equation}

The ranking procedure aims to assign numerical values corresponding
to the strength of preferences to alternatives. In this way, each
of the alternatives gains a certain weight (also called priority or
importance) that determines its position in the ranking. The higher
the weight, the higher the position. Let us denote this weight by
the function $w:A\rightarrow\mathbb{R}_{+}$. As elements of $C$
are the results of paired comparisons between alternatives, then it
is natural to expect\footnote{In fact, it is natural to expect that $c_{ij}$ is related to some
kind of comparison between $w(a_{i})$ and $w(a_{j})$. Hence, comparison
in the form of the ratio $w(a_{i})/w(a_{j})$ is one, possibly the
most popular, option. Another way to compare two priorities is to
subtract one from the other. This leads to so-called additive PC matrices
\citep{Gavalec2014dmao}, which we will not deal with in this paper.} that $c_{ij}\approx w(a_{i})/w(a_{j})$. Since $c_{ij}$ represents
a comparison of the i-th and j-th alternatives and is interpreted
as the ratio $w(a_{i})/w(a_{j})$, then $c_{ji}$ means the comparison
of the j-th alternative versus the i-th alternative interpreted as
the ratio $w(a_{j})/w(a_{i})$. For this reason, we will accept that
$c_{ij}=1/c_{ji}$. In this context, it will be convenient to use
the reciprocity property defined as follows.
\begin{defn}
The incomplete PC matrix $C=[c_{ij}]$ is said to be reciprocal if
for every $i,j=1,\ldots,n$ it holds that $c_{ij}=1/c_{ji}$ when
$c_{ij}\neq?$, otherwise $c_{ij}=c_{ji}=?$.
\end{defn}
In our further considerations, it will be convenient to interpret
the set of comparisons as a graph. Very often, there is a directed
graph where the direction of the edge indicates the winner of the
given comparison. For the purpose of this article, however, we will
use an undirected graph. This will allow us to represent the existence
(or absence) of comparisons, without having to indicate the exact
relationship between the two alternatives.
\begin{defn}
\label{def:matrix-graph}The undirected graph $T_{C}=(A,E)$ is a
graph of the PC matrix $C=[c_{ij}]$ if $A=\{a_{1},\ldots,a_{n}\}$
is a set of vertices and $E=\{\{a_{i},a_{j}\}\in2^{A}\,\,|\,\,i\neq j\,\,\text{and}\,\,c_{ij}\neq?\}$
is a set of edges.
\end{defn}
One of the often desirable properties of a graph is connectivity.
\begin{defn}
The graph $T_{C}=(A,E)$ is said to be connected if for every two
vertices $a_{i}.a_{j}\in A$ there is a path $a_{i}=a_{r_{1}},a_{r_{2}},\ldots,a_{r_{q}}=a_{j}$
such that $\{a_{r_{i}},a_{r_{i+1}}\}\in E$.
\end{defn}
In terms of the PC matrix, the connectivity of vertices (identified
with alternatives) is necessary to calculate the ranking \citep{Harker1987amoq}.
This is quite an intuitive observation. If we assume that the graph
consists of two subgraphs separated from each other, then there will
be no relation (comparison) allowing the decision maker to determine
how one subgraph is relative to the other. In this case, it is clear
that it would be impossible to build a ranking of all alternatives.
The matrices whose graphs are connected are irreducible\footnote{In the case of directed graphs, strong connectivity is required.}
\citep{Quarteroni2000nm}.

One of the frequently used properties of graph vertices is the vertex
degree.
\begin{defn}
\label{def:degree-vert}The degree of the vertex $a_{i}\in V$, where
$T_{C}=(A,E)$ is a graph of the PC matrix $C$, is given as $\deg(a_{i},T_{C})=\left|\left\{ a_{j}\,\,|\,\,\exists e\in E\,\,:\,\,a_{j}\in e\right\} \right|$.
\end{defn}
The concept of the vertex degree allows us to construct the degree
matrix defined as follows.
\begin{defn}
\label{def:degree-matrix}The degree matrix of the graph $T_{C}=(A,E)$
is the matrix $D(T_{C})=[d_{ij}]$ such that
\[
d_{ij}=\begin{cases}
\deg(a_{i},T_{C}) & \text{if}\,\,i=j\\
0 & \text{otherwise}
\end{cases}
\]
\end{defn}
The adjacency matrix is a frequently used representation of the graph
\citep{Cormen2009ita}.
\begin{defn}
\label{def:adjacency-matrix}An adjacency matrix of the graph $T_{C}=(A,E)$
is the matrix $P(T_{C})=[p_{ij}]$ such that
\[
p_{ij}=\begin{cases}
1 & \text{if}\,\,\{a_{i},a_{j}\}\in E\\
0 & \text{otherwise}
\end{cases}
\]
\end{defn}
The matrix that combines both previous matrices is the Laplacian matrix
$L(T_{C})$.
\begin{defn}
The Laplacian matrix $L(T_{C})$ of the graph $T_{C}=(A,E)$ is the
matrix $L(T_{C})=D(T_{C})-P(T_{C})$.

The properties of the Laplacian matrix allow us to justify the existence
of the solution of the method proposed in (Section \ref{sec:Idea-of-the}).
\end{defn}

\section{Priority deriving methods for incomplete PC matrices\label{sec:Priority-deriving-methods}}

One of the first methods allowing the decision maker to calculate
the ranking based on incomplete PC matrices was proposed by \emph{Harker}
\citep{Harker1987amoq}. In this approach, the author uses the eigenvalue
method (EVM) proposed by \emph{Saaty} \citep{Saaty1977asmf}. According
to EVM, the ranking is determined by the principal eigenvector of
$C$ understood as the solution of 
\begin{equation}
Cw=\lambda_{\textit{max}}w,\label{eq:evm-eq}
\end{equation}
where $\lambda_{\textit{max}}$ is the principal eigenvalue of $C$.
Of course, EVM cannot be directly applied to an incomplete PC matrix.
Thus \emph{Harker} proposed the replacement of every missing $c_{ij}=?$
by the expression $w(a_{i})/w(a_{j})$. He argued that since $c_{ij}\approx w(a_{i})/w(a_{j})$
then the most natural replacement for $c_{ij}=?$ is just the ratio
$w(a_{i})/w(a_{j})$.

Thus, instead of solving (\ref{eq:evm-eq}), one has to deal with
the following equation:
\begin{equation}
C^{*}w=\lambda_{\textit{max}}w\label{eq:harker-eq}
\end{equation}

where $w$ is the weight vector and $C^{*}=[c_{ij}^{*}]$ is the PC
matrix such that 
\begin{equation}
c_{ij}^{*}=\begin{cases}
c_{ij} & \text{if}\,\,c_{ij}\neq?\\
w(a_{i})/w(a_{j}) & \text{if}\,\,c_{ij}=?
\end{cases}.\label{eq:harkers-completion}
\end{equation}
Of course, (\ref{eq:harker-eq}) cannot be directly solved as $C^{*}$
contains a priori unknown values $w(a_{i})/w(a_{j})$. Fortunately,
(\ref{eq:harker-eq}) is equivalent to the following linear equation
system:
\begin{equation}
Bw=\lambda_{\textit{max}}w,\label{eq:aux-harker-eq}
\end{equation}
where $B=[b_{ij}]$ is the matrix such that 
\[
b_{ij}=\begin{cases}
0 & \text{if}\,\,c_{ij}=?\,\,\text{and}\,\,i\neq j\\
c_{ij} & \text{if}\,\,c_{ij}\neq?\,\,\text{and}\,\,i\neq j\\
s_{i}+1 & \text{if}\,\,i=j
\end{cases}.
\]
Since $B$ is an ordinary matrix, then (\ref{eq:aux-harker-eq}) can
be solved by using standard mathematical tools, including \emph{Excel
Solver} provided by \emph{Microsoft} Inc. \emph{Harker} proved that
(\ref{eq:aux-harker-eq}) has a solution and the principal eigenvector
$w$ is real and positive, which is a condition for the solution to
be admissible.

Another approach to the problem of ranking for incomplete PC matrices
has been proposed in \emph{Bozóki} et al. \citep{Bozoki2010ooco}.
Following \emph{Crawford} and \emph{Williams} and their Geometric
Mean (GM) Method \citep{Crawford1985taos}, the authors assume that
the optimal solution: 
\begin{equation}
w=\left(\begin{array}{c}
w(a_{1})\\
\vdots\\
\vdots\\
w(a_{n})
\end{array}\right)\label{eq:prior-vect}
\end{equation}
needs to minimize the distance between every pair $c_{ij}$ and the
ratio $w(a_{i})/w(a_{j})$. In the original work proposed for complete
PC matrices, this condition takes the form of a square of the logarithms
of these expressions, i.e.
\begin{equation}
S(C)=\sum_{i,j=1}^{n}\left(\log c_{ij}-\log\frac{w(a_{i})}{w(a_{j})}\right)^{2}.\label{eq:logairithmic-least-square-cond}
\end{equation}
Since the authors are interested in incomplete PC matrices, then the
above condition takes the form:
\begin{equation}
S^{*}(C)=\sum_{\substack{i,j=1\\
c_{ij}\neq?
}
}^{n}\left(\log c_{ij}-\log\frac{w(a_{i})}{w(a_{j})}\right)^{2},\label{eq:logairithmic-least-square-cond-incompl}
\end{equation}
where the distance between missing entries and the corresponding ratios
are just not taken into account. The authors prove \citep{Bozoki2010ooco,Bozoki2019tlls}
that solving the following problem: 
\begin{align*}
L(T_{C})\widehat{w} & =b,\\
\widehat{w}(a_{1}) & =0,
\end{align*}
where $L(T_{C})$ is the Laplacian matrix of $C$, $\widehat{w}=[\widehat{w}(a_{1}),\ldots,w(a_{n})]^{T}$
is the logarithmized priority vector $w$ i.e. $\widehat{w}(a_{i})=\ln w(a_{i})$
and $b=[b_{1},\ldots,b_{n}]^{T}$ such that 
\[
b_{i}=\sum_{\substack{j=1\\
c_{ij}\neq?
}
}^{n}\log c_{ij},
\]
provides the ranking vector minimizing $S^{*}(C)$. Hence, in order
to receive the primary weight vector (\ref{eq:prior-vect}), it is
enough to adopt $w(a_{i})=e^{\widehat{w}(a_{i})}$ for $i=1,\ldots,n.$
Due to the form of $S^{*}(C)$, the method is called the Logarithmic
Least Square (LLS) Method for incomplete PC matrices.

In their work, \emph{Bozóki} et al. also consider the algorithm for
principal eigenvalue optimal completion. In \citep{Lundy2017tmeo},
\emph{Lundy et al}. point out that the ranking based on ``spanning
trees'' of $T_{C}$ is equivalent to the GM method. In particular,
they indicate that the method can be used for incomplete PC matrices.
Later on, \emph{Bozóki} and \emph{Tsyganok} proved that the ``spanning
trees'' method for incomplete PC matrices is equivalent to the LLS
method for incomplete PC matrices \citep{Bozoki2019tlls}.

\section{Idea of the geometric mean method for incomplete PC matrices\label{sec:Idea-of-the}}

Following Harker's method \citep{Harker1987amoq}, the proposed solution
assumes that $C$ is irreducible ($T_{C}$ is connected) and the optimal
completion of the incomplete PC matrix $C$ is $C^{*}=[c_{ij}^{*}]$
(\ref{eq:harkers-completion}). Hence, every missing $c_{ij}=?$ in
$C$ is replaced by the ratio $w(a_{i})/w(a_{j})$ in $C^{*}$. However,
unlike in \citep{Harker1987amoq}, $C^{*}$ is the subject of the
GM method. Then, let us calculate the geometric mean of rows for $C^{*}$,
i.e. 
\begin{equation}
\left(\prod_{j=1}^{n}c_{ij}^{*}\right)^{\frac{1}{n}}=w(a_{i})\,\,\,\textit{for}\,\,\,i=1,\ldots,n.\label{eq:primary-problem}
\end{equation}
One may observe that the above equation system is equivalent to
\[
\sum_{j=1}^{n}\ln c_{ij}^{*}=n\ln w(a_{i})\,\,\,\textit{for}\,\,\,i=1,\ldots,n.
\]
Let us split the left side of the equation into two parts. One for
the missing values in $C$, the other for the existing elements.
\[
\sum_{\substack{j=1\\
c_{ij}\neq?
}
}^{n}\ln c_{ij}^{*}+\sum_{\substack{j=1\\
c_{ij}=?
}
}^{n}\ln\frac{w(a_{i})}{w(a_{j})}=n\ln w(a_{i})\,\,\,\textit{for}\,\,\,i=1,\ldots,n.
\]
Hence,

\[
\sum_{\substack{j=1\\
c_{ij}\neq?
}
}^{n}\ln c_{ij}^{*}+\sum_{\substack{j=1\\
c_{ij}=?
}
}^{n}\left(\ln w(a_{i})-\ln w(a_{j})\right)=n\ln w(a_{i})\,\,\,\textit{for}\,\,\,i=1,\ldots,n.
\]
Therefore,
\[
\sum_{\substack{j=1\\
c_{ij}\neq?
}
}^{n}\ln c_{ij}-\sum_{\substack{j=1\\
c_{ij}=?
}
}^{n}\ln w(a_{j})=n\ln w(a_{i})-\sum_{\substack{j=1\\
c_{ij}=?
}
}^{n}\ln w(a_{i})\,\,\,\textit{for}\,\,\,i=1,\ldots,n.
\]
Let us denote the number of missing elements in the i-th row of $C$
as $S_{i}=\sum_{\substack{j=1\\
c_{ij}=?
}
}^{n}1$. Then,

\[
(n-S_{i})\ln w(a_{i})+\sum_{\substack{j=1\\
c_{ij}=?
}
}^{n}\ln w(a_{j})=\sum_{\substack{j=1\\
c_{ij}\neq?
}
}^{n}\ln c_{ij}\,\,\,\textit{for}\,\,\,i=1,\ldots,n.
\]
For convenience, let us write $\ln w(a_{i})=\widehat{w}(a_{i})$ and
$\ln c_{ij}=\widehat{c}_{ij}$. Thus, the above equation system obtains
the form
\[
(n-S_{i})\widehat{w}(a_{i})+\sum_{\substack{j=1\\
c_{ij}=?
}
}^{n}\widehat{w}(a_{j})+\sum_{\substack{j=1\\
c_{ij}\neq?
}
}^{n}0=\sum_{\substack{j=1\\
c_{ij}\neq?
}
}^{n}\ln c_{ij}\,\,\,\textit{for}\,\,\,i=1,\ldots,n.
\]
It can be written in the matrix form as 
\begin{equation}
M\widehat{w}=r,\label{eq:matrix-main-eq}
\end{equation}
where
\[
\widehat{w}=\left(\begin{array}{c}
\widehat{w}(a_{1})\\
\vdots\\
\vdots\\
\widehat{w}(a_{n})
\end{array}\right),\,\,\,\,\,r=\left(\begin{array}{c}
\sum_{\substack{j=1\\
c_{1j}\neq?
}
}^{n}\ln c_{1j}\\
\vdots\\
\vdots\\
\sum_{\substack{j=1\\
c_{nj}\neq?
}
}^{n}\ln c_{nj}
\end{array}\right),
\]
and $M=[m_{ij}]$ is such that 
\begin{equation}
m_{ij}=\begin{cases}
n-S_{i} & \text{if}\,\,i=j\\
0 & \text{if}\,\,i\neq j\,\,\text{and}\,\,c_{ij}\neq?\\
1 & \text{if}\,\,i\neq j\,\,\text{and}\,\,c_{ij}=?
\end{cases}.\label{eq:q-matrix-def}
\end{equation}
By solving (\ref{eq:matrix-main-eq}) we obtain the auxiliary vector
$\widehat{w}$ providing us the solution of the primary problem (\ref{eq:primary-problem}),
i.e. 
\[
w=\left(\begin{array}{c}
e^{\widehat{w}(a_{1})}\\
\vdots\\
\vdots\\
e^{\widehat{w}(a_{n})}
\end{array}\right).
\]
Of course, $w$ can be the subject of scaling, so as the final ranking
we may adopt $\alpha w$ where $\alpha=\left(\sum_{i=1}^{n}e^{\widehat{w}(a_{i})}\right)^{-1}$.

In (Sec. \ref{subsec:Existence-of-solution}) we show that $M$ is
nonsingular, hence, (\ref{eq:matrix-main-eq}) and, as follows, (\ref{eq:primary-problem})
always has a unique solution.

\section{Illustrative example\label{sec:Illustrative-example}}

Let us consider the incomplete PC matrix $C^{*}$ given as

\[
C^{*}=\left(\begin{array}{cccc}
1 & \frac{w(a_{1})}{w(a_{2})} & \frac{w(a_{1})}{w(a_{3})} & c_{14}\\
\frac{w(a_{2})}{w(a_{1})} & 1 & c_{23} & \frac{w(a_{2})}{w(a_{4})}\\
\frac{w(a_{3})}{w(a_{1})} & c_{32} & 1 & c_{34}\\
c_{41} & \frac{w(a_{4})}{w(a_{2})} & c_{43} & 1
\end{array}\right).
\]
It corresponds to the following equation system:

\[
\begin{array}{ccc}
\left(1\cdot\frac{w(a_{1})}{w(a_{2})}\cdot\frac{w(a_{1})}{w(a_{3})}\cdot c_{14}\right)^{\frac{1}{4}} & = & w(a_{1})\\
\left(\frac{w(a_{2})}{w(a_{1})}\cdot1\cdot c_{23}\cdot\frac{w(a_{2})}{w(a_{4})}\right)^{\frac{1}{4}} & = & w(a_{2})\\
\left(\frac{w(a_{3})}{w(a_{1})}\cdot c_{32}\cdot1\cdot c_{34}\right)^{\frac{1}{4}} & = & w(a_{3})\\
\left(c_{41}\cdot\frac{w(a_{4})}{w(a_{2})}\cdot c_{43}\cdot1\right)^{\frac{1}{4}} & = & w(a_{4})
\end{array}.
\]
To solve it, let us raise both sides to the fourth power,

\[
\begin{array}{ccc}
1\cdot\frac{w(a_{1})}{w(a_{2})}\cdot\frac{w(a_{1})}{w(a_{3})}\cdot c_{14} & = & w^{4}(a_{1})\\
\frac{w(a_{2})}{w(a_{1})}\cdot1\cdot c_{23}\cdot\frac{w(a_{2})}{w(a_{4})} & = & w^{4}(a_{2})\\
\frac{w(a_{3})}{w(a_{1})}\cdot c_{32}\cdot1\cdot c_{34} & = & w^{4}(a_{3})\\
c_{41}\cdot\frac{w(a_{4})}{w(a_{2})}\cdot c_{43}\cdot1 & = & w^{4}(a_{4})
\end{array},
\]
and apply logarithm transformation to both sides. Then, we obtain
\[
\begin{array}{ccc}
\ln\left(1\cdot\frac{1}{w(a_{2})}\cdot\frac{1}{w(a_{3})}\cdot c_{14}\right) & = & 2\ln w(a_{1})\\
\ln\left(\frac{1}{w(a_{1})}\cdot1\cdot c_{23}\cdot\frac{1}{w(a_{4})}\right) & = & 2\ln w(a_{2})\\
\ln\left(\frac{1}{w(a_{1})}\cdot c_{32}\cdot1\cdot c_{34}\right) & = & 3\ln w(a_{3})\\
\ln\left(c_{41}\cdot\frac{1}{w(a_{2})}\cdot c_{43}\cdot1\right) & = & 3\ln w(a_{4})
\end{array},
\]
which is equivalent to 
\begin{equation}
\begin{array}{ccc}
0-\ln w(a_{2})-\ln w(a_{3})+\ln c_{14} & = & 2\ln w(a_{1})\\
-\ln w(a_{1})+0+\ln c_{23}-\ln w(a_{4}) & = & 2\ln w(a_{2})\\
-\ln w(a_{1})+\ln c_{32}+0+\ln c_{34} & = & 3\ln w(a_{3})\\
\ln c_{41}-\ln w(a_{2})+\ln c_{43}+0 & = & 3\ln w(a_{4})
\end{array}.\label{eq:ch4:logarithmed-eq-system}
\end{equation}
The above (\ref{eq:ch4:logarithmed-eq-system}) can be written as
the following linear equation system

\[
\begin{array}{ccc}
2\ln w(a_{1})+\ln w(a_{2})+\ln w(a_{3})+0 & = & \ln c_{14}\\
\ln w(a_{1})+2\ln w(a_{2})+0+\ln w(a_{4}) & = & \ln c_{23}\\
\ln w(a_{1})+0+3\ln w(a_{3})+0 & = & +\ln c_{32}+\ln c_{34}\\
0+\ln w(a_{2})+0+3\ln w(a_{4}) & = & +\ln c_{41}+\ln c_{43}
\end{array}.
\]
Let us denote $\log w(a_{i})\overset{\textit{df}}{=}\widehat{w}(a_{i})$
and $\ln c_{ij}\overset{\textit{df}}{=}\widehat{c}_{ij}.$ Hence,
the above equation system can be written down in the matrix form
\begin{equation}
M\widehat{w}=\widehat{c},\label{eq:ch4:auxiliary-linear-matrix-equation}
\end{equation}
where the auxiliary matrix $M$ is given as
\[
M=\left(\begin{array}{cccc}
2 & 1 & 1 & 0\\
1 & 2 & 0 & 1\\
1 & 0 & 3 & 0\\
0 & 1 & 0 & 3
\end{array}\right).
\]
The constant term vector $\widehat{c}$ and the vector $\widehat{w}$
are as follows: 
\[
\widehat{c}=\left(\begin{array}{c}
\widehat{c}_{14}\\
\widehat{c}_{23}\\
\widehat{c}_{32}+\widehat{c}_{34}\\
\widehat{c}_{41}+\widehat{c}_{43}
\end{array}\right),\,\,\,\,\widehat{w}=\left(\begin{array}{c}
\widehat{w}(a_{1})\\
\widehat{w}(a_{2})\\
\widehat{w}(a_{3})\\
\widehat{w}(a_{4})
\end{array}\right).
\]
Since the matrix $M$ is non-singular, we may compute the auxiliary
vector: 
\[
\widehat{w}=\left(\begin{array}{c}
\frac{15\widehat{c}_{14}-9\widehat{c}_{23}-5\widehat{c}_{32}-5\widehat{c}_{34}+3\widehat{c}_{41}+3\widehat{c}_{43}}{16}\\
\frac{-9\widehat{c}_{14}+15\widehat{c}_{23}+3\widehat{c}_{32}+3\widehat{c}_{34}-5\widehat{c}_{41}-5\widehat{c}_{43}}{16}\\
\frac{-5\widehat{c}_{14}+3\widehat{c}_{23}+7\widehat{c}_{32}+7\widehat{c}_{34}-\widehat{c}_{41}-\widehat{c}_{43}}{16}\\
\frac{3\widehat{c}_{14}-5\widehat{c}_{23}-\widehat{c}_{32}-\widehat{c}_{34}+7\widehat{c}_{41}+7\widehat{c}_{43}}{16}
\end{array}\right).
\]
Finally, after exponential transformation, we obtain the unscaled
ranking vector:
\[
w=\left(\begin{array}{c}
e^{\frac{15\widehat{c}_{14}-9\widehat{c}_{23}-5\widehat{c}_{32}-5\widehat{c}_{34}+3\widehat{c}_{41}+3\widehat{c}_{43}}{16}}\\
e^{\frac{-9\widehat{c}_{14}+15\widehat{c}_{23}+3\widehat{c}_{32}+3\widehat{c}_{34}-5\widehat{c}_{41}-5\widehat{c}_{43}}{16}}\\
e^{\frac{-5\widehat{c}_{14}+3\widehat{c}_{23}+7\widehat{c}_{32}+7\widehat{c}_{34}-\widehat{c}_{41}-\widehat{c}_{43}}{16}}\\
e^{\frac{3\widehat{c}_{14}-5\widehat{c}_{23}-\widehat{c}_{32}-\widehat{c}_{34}+7\widehat{c}_{41}+7\widehat{c}_{43}}{16}}
\end{array}\right).
\]
Of course, for specific values e.g. $c_{14}=c_{34}=2$ and $c_{23}=3$
($c_{41}=c_{43}=1/2$ and $c_{32}=1/3$), it is easy to compute that
$w=[0.18,0.54,0.18,0.09]^{T}$. Thus, in this particular case $a_{2}$
is the most preferred alternative, then ex aequo alternatives $a_{1}$
and $a_{3}$, and the least preferred option is $a_{4}$.

\section{Properties of the method\label{sec:Properties-of-the}}

\subsection{Existence of a solution\label{subsec:Existence-of-solution}}

One may observe that for the fixed incomplete and irreducible PC matrix
$C$ the auxiliary matrix $M$ (\ref{eq:q-matrix-def}) can be written
as:

\[
M=L(T_{C})+J_{n}
\]
where $L(T_{C})$ is the Laplacian matrix of $T_{C}$, and $J_{n}$
is the $n\times n$ matrix, each of whose entries is $1$. Let us
prove that $M$ is nonsingular.
\begin{thm}
The matrix $M=L(T_{C})+J_{n}$ is nonsingular.
\end{thm}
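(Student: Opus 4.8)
The plan is to show that $M$ is in fact symmetric positive definite, which immediately yields nonsingularity. First I would recall the two standard facts about the Laplacian of a connected graph that are already alluded to after the definition of $L(T_{C})$: it is symmetric positive semidefinite, with the explicit quadratic form $x^{T}L(T_{C})x=\sum_{\{a_{i},a_{j}\}\in E}(x_{i}-x_{j})^{2}\ge 0$, and its kernel is exactly the one-dimensional space spanned by the all-ones vector $e=(1,\ldots,1)^{T}$. The inclusion $\operatorname{span}(e)\subseteq\ker L(T_{C})$ is clear; for the reverse, $x^{T}L(T_{C})x=0$ forces $x_{i}=x_{j}$ along every edge of $T_{C}$, and since $T_{C}$ is connected (because $C$ is irreducible) this makes $x$ constant, i.e. $x\in\operatorname{span}(e)$. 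The matrix $J_{n}$ is likewise symmetric positive semidefinite, since $J_{n}=ee^{T}$ and hence $x^{T}J_{n}x=(e^{T}x)^{2}\ge 0$.

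Next I would take an arbitrary $x$ with $Mx=0$ and evaluate the quadratic form: $0=x^{T}Mx=x^{T}L(T_{C})x+(e^{T}x)^{2}$. Both summands are nonnegative, so each must vanish. From $x^{T}L(T_{C})x=0$ and connectivity of $T_{C}$ we get $x=\gamma e$ for some scalar $\gamma$; from $(e^{T}x)^{2}=0$ we get $e^{T}x=0$, i.e. $\gamma n=0$, hence $\gamma=0$ and $x=0$. Therefore $\ker M=\{0\}$ and $M$ is nonsingular.

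To obtain the sharper (and arguably cleaner) statement that $M$ is positive definite, I would alternatively note that $L(T_{C})$ and $J_{n}$ commute: since $L(T_{C})e=0$ and $L(T_{C})$ is symmetric, $L(T_{C})J_{n}=L(T_{C})ee^{T}=0=ee^{T}L(T_{C})=J_{n}L(T_{C})$. Being commuting symmetric matrices, they are simultaneously orthogonally diagonalizable: pick the orthonormal basis consisting of $e/\sqrt{n}$ together with an orthonormal eigenbasis of $L(T_{C})$ inside $e^{\perp}$ (the eigenvectors for the nonzero Laplacian eigenvalues $0<\mu_{2}\le\cdots\le\mu_{n}$). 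On $e$ the eigenvalue of $M$ is $0+n=n>0$; on each remaining eigenvector it is $\mu_{i}+0=\mu_{i}>0$. Hence every eigenvalue of $M$ is strictly positive, so $M$ is positive definite and in particular nonsingular.

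The only genuinely nontrivial ingredient here is the characterization $\ker L(T_{C})=\operatorname{span}(e)$ for connected $T_{C}$; everything else is routine bookkeeping with quadratic forms. I would include the short edge-by-edge argument above for that fact, or simply cite it as a standard property of graph Laplacians, which is consistent with the remark following the definition of $L(T_{C})$.
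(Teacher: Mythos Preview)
Your proposal is correct. The second argument you sketch---observing that $L(T_{C})$ and $J_{n}$ share an orthonormal eigenbasis $\{e/\sqrt{n}\}\cup\{\text{eigenvectors of }L(T_{C})\text{ in }e^{\perp}\}$ and reading off the eigenvalues $n,\mu_{2},\ldots,\mu_{n}>0$ of $M$---is precisely the paper's proof: the paper does not phrase it via commutativity but simply computes $Mw_{i}$ on the Laplacian eigenbasis and arrives at the same eigenvalue list. Your first argument, via the quadratic form $x^{T}Mx=\sum_{\{a_{i},a_{j}\}\in E}(x_{i}-x_{j})^{2}+(e^{T}x)^{2}$, is a slightly different and more self-contained route: it bypasses any explicit spectral statement about $L(T_{C})$ and gets $\ker M=\{0\}$ directly from the edge-by-edge connectivity argument you include. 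What the spectral approach buys is the explicit eigenvalue list of $M$ (not needed for nonsingularity, but informative); what your quadratic-form approach buys is brevity and independence from the cited black-box fact that the algebraic connectivity of a connected graph is strictly positive.
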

\begin{proof}
Let $\lambda_{1},\ldots,\lambda_{n}$ be the eigenvalues of $L(T_{C})$
in the non increasing order, i.e. $\lambda_{1}\geq\ldots\geq\lambda_{n}$
and $w_{1},\ldots,w_{n}$ are corresponding eigenvectors. Since $T_{C}$
is connected (as $C$ is irreducible) then $\lambda_{n-1}>0$ \citep[p. 147]{Merris1994lmog}.
On the other hand, for every i-th row of $L(T_{C})$ it holds that
$l_{ii}=\sum_{j=1,i\neq j}^{n}\left|l_{ij}\right|$, thus for $w_{n}=[1,\ldots,1]^{T}$
we have $L(T_{C})w_{n}=0$, which implies that $\lambda_{n}=0$. In
other words, all the eigenvalues except the smallest one are real
and positive. $L(T_{C})$ is a symmetric, thus, all its eigenvectors
are orthogonal. In particular this means that every $w_{1},\ldots,w_{n-1}$
is orthogonal to $w_{n}$, i.e. $w_{i}^{T}w_{n}=0$ for $i=1,\ldots,n$.
Let us consider the equation 
\begin{equation}
Mw_{i}=\lambda_{i}w_{i},~~~\text{for}~~~i=1,\ldots,n\label{eq:llaplacian-plus-one}
\end{equation}
i.e. 
\[
L(T_{C})w_{i}+J_{n}w_{i}=\lambda_{i}\cdot w_{i},~~~\text{for}~~~i=1,\ldots,n
\]
For $i=1,\ldots,n-1$ due to the orthogonality holds that $w_{i}^{T}w_{n}=0$
i.e. 
\[
J_{n}w_{i}=\left(\begin{array}{c}
0\\
\vdots\\
\vdots\\
0
\end{array}\right).
\]
Hence (\ref{eq:llaplacian-plus-one}) boils down to $L(T_{C})w_{i}=\lambda_{i}w_{i}$
for $i=1,\ldots,n$. This implies that $\lambda_{1},\ldots,\lambda_{n-1}$
are also eigenvalues of $M$. For $\lambda_{n}$, however, 
\[
L(T_{C})w_{n}=\left(\begin{array}{c}
0\\
\vdots\\
\vdots\\
0
\end{array}\right)
\]
 and 
\[
J_{n}w_{n}=n\cdot w_{n}.
\]
Thus,
\[
Mw_{n}=n\cdot w_{n}.
\]
It means that the eigenvalue $0$ for the matrix $L(T_{C})$ has turned
into $n$ as the eigenvalue of $M$. Thus, the eigenvalues of $M$
are $n,\lambda_{1},\lambda_{2},\ldots,\lambda_{n-1}$. Since all the
eigenvalues of $M$ are positive, then $M$ is nonsingular.
\end{proof}
From the above theorem, it follows that (\ref{eq:matrix-main-eq})
always has a unique solution.

\subsection{Optimality}

The GM method \citep{Crawford1985taos} is considered as optimal since
it minimizes the logarithmic least square (LLS) condition $S(C)$
(\ref{eq:logairithmic-least-square-cond}) for some (complete) PC
matrix $C$. It is natural to assume that the LLS condition for an
incomplete PC matrix has to be limited to the existing entries. Hence,
the LLS condition for the incomplete PC matrices $S^{*}(C)$ (\ref{eq:logairithmic-least-square-cond-incompl})
has been formulated in a way that the missing values $c_{ij}=?$ are
not taken into account \citep{Bozoki2019tlls,Bozoki2010ooco,Kwiesielewicz1996tlls}.

The proposed method (Sec. \ref{sec:Idea-of-the}) is, in fact, the
GM method applied to the completed matrix $C^{*}$. Therefore, due
to \emph{Crawford's} theorem, the vector $w$ obtained as the solution
of (\ref{eq:primary-problem}) minimizes $S(C^{*})$, where
\[
S(C^{*})=\sum_{i,j=1}^{n}\left(\log c_{ij}^{*}-\log\frac{w(a_{i})}{w(a_{j})}\right)^{2}.
\]
Due to the missing values in $C$, we may rewrite the above equation
as
\[
S(C^{*})=\sum_{\substack{i,j=1\\
c_{ij}\neq?
}
}^{n}\left(\log c_{ij}-\log\frac{w(a_{i})}{w(a_{j})}\right)^{2}+\sum_{\substack{i,j=1\\
c_{ij}=?
}
}^{n}\left(\log\frac{w(a_{i})}{w(a_{j})}-\log\frac{w(a_{i})}{w(a_{j})}\right)^{2}
\]
Hence,
\[
S(C^{*})=\sum_{\substack{i,j=1\\
c_{ij}\neq?
}
}^{n}\left(\log c_{ij}-\log\frac{w(a_{i})}{w(a_{j})}\right)^{2},
\]
Thus, $S^{*}(C)=S(C^{*})$, so the proposed method (Sec. \ref{sec:Idea-of-the})
minimizes $S^{*}(C)$. Hence, it is optimal. This also means that
it is equivalent to the LLS method for incomplete PC matrices, as
defined in \citep{Bozoki2010ooco}.

\section{Summary\label{sec:Summary}}

In this work, the GM method for incomplete PC matrices has been proposed.
Its optimality and equivalence of the LLS method have been proven.
The advantages of the presented solution are, on the one hand, optimality
and, on the other hand, its relative simplicity. To compute the ranking,
one needs to solve the linear equation system and perform appropriate
logarithmic transformations on the matrix. Hence, with some reasonable
effort, the ranking can be computed using even Microsoft's Excel with
its embedded solver.

The proposed method meets the need for an efficient and straightforward
ranking calculation procedure for incomplete matrices. It is based
on the well-known and trusted GM method initially defined for complete
PC matrices. We hope that it will be a valuable addition to the existing
solutions.

\section{Acknowledgment}

I would like to thank Prof. Ryszard Szwarc for his insightful remarks
on the existence of the solution (Sec. \ref{subsec:Existence-of-solution}).
Special thanks are due to Ian Corkil for his editorial help. The research
was supported by the National Science Centre, Poland, as part of project
no. 2017/25/B/HS4/01617.

\bibliographystyle{elsarticle-harv}
\addcontentsline{toc}{section}{\refname}\bibliography{/Users/konrad/work/science/biblio/papers_biblio_reviewed}

\end{document}